\newcounter{theorem}
\newtheorem{Theorem}[theorem]{Theorem}
\newcounter{corollary}
\newtheorem{Corollary}[corollary]{Corollary}
\newcounter{definition}
\newtheorem{Definition}[definition]{Definition}
\newcounter{remark}
\newtheorem{Remark}[remark]{Remark}
\begin{document}
\sloppy

\parindent 0cm

\begin{center}
\textbf{\Large{Chirality in Affine Spaces and in Spacetime}}
\end{center}
\vskip 0.5cm

\begin{center}
\textbf{Michel Petitjean}
\end{center}
\vskip 0.5cm

\small{$^{1}$ Universit\'e de Paris, BFA, CNRS UMR 8251, INSERM ERL U1133, F-75013 Paris, France.\\
       $^{2}$ E-p\^ole de G\'enoinformatique, CNRS UMR 7592, Institut Jacques Monod, 75013 Paris, France.}\\
       \textit{E-mail}: petitjean.chiral@gmail.com, michel.petitjean@u-paris.fr.\\
       \href{http://petitjeanmichel.free.fr/itoweb.petitjean.html}{http://petitjeanmichel.free.fr/itoweb.petitjean.html}.\\
       \textit{ORCID}: \href{https://orcid.org/0000-0002-1745-5402}{0000-0002-1745-5402}.

\vskip 0.5cm
\textbf{Abstract}:
An object is chiral when its symmetry group contains no indirect isometry.
It can be difficult to classify isometries as direct or indirect, except in the Euclidean case.
We classify them with the help of outer semidirect products of isometry groups, in particular in the case of an affine space defined over a finite-dimensional real quadratic space.
We also classify as direct or indirect the isometries of the real Lorentz-Minkowski spacetime and those of the classical spacetime defined by the Newton-Cartan theory.

\vskip 0.5cm
\textbf{Keywords}:
direct isometry; indirect isometry; orthogonal group; chirality; quadratic spaces; Lorentz-Minkowski spacetime; Newton-Cartan theory

\vskip 0.5cm
\textbf{2020 MSC codes}: 58D19, 51F99 

\parindent 1cm
\section{Introduction}\label{sec:INTRO}

Chirality is of major importance in physics, chemistry, and biology \cite{Kamenetskii2021,Mezey1991,Hargittai2009,Palyi2020}.
Euclidean isometries are usually classified into direct and indirect isometries.
An object is chiral if and only if its symmetry group contains no indirect isometry (the term object is defined in Section \ref{sec:DICO}).
A linear isometry of $E^n$ is represented by a $n \times n$ matrix.
This isometry is direct if the matrix determinant is $+1$, and it is indirect if this determinant is $-1$ \cite{Cederberg2001}.
It follows that a reflection is an indirect isometry and that any composition of linear isometries containing an odd number of reflections is an indirect isometry.
The Euclidean translations are also direct isometries, and it was stated that Euclidean direct isometries are those preserving orientation \cite{Cederberg2001}.
According to the historical definition of chirality coined by Lord Kelvin \cite{Kelvin1894}, a set embedded in $E^n$ is chiral if no rigid motion of this set can bring it into superposition with its mirror image, and otherwise it is said to be achiral \cite{Mezey1997}.
It was also stated that the absence of any kind of reﬂection symmetry is the prerequisite for chirality \cite{Gerlach2009}.

But neither computing matrix determinants nor looking at orientation \mbox{preservation} is relevant to define direct and indirect isometries in general spaces.
A suitable \mbox{definition} of direct and indirect isometries is required.
We retain one which is based only on the group of space isometries (so that no space orientation is needed), and which recovers the usual meaning in $E^n$ \cite{Petitjean2020}.
It is presented in Section \ref{sec:DICO}.
In \mbox{Section \ref{sec:RUL}} we present the composition rules of isometries in the general case.
Our main results about classication of isometries are in Section \ref{sec:SDP}.
In \mbox{Section \ref{sec:AFF}} we use these results to make explicit the composition rules for the affine space over a finite-dimensional real quadratic space.
We also make explicit these rules for the real Lorentz-Minkowski spacetime in Section \ref{sec:MINK}, and for the Newton-Cartan spacetime (that is, the classical spacetime defined by the Newton-Cartan theory), in Sections \ref{sec:GAL} and \ref{sec:FGAL}.
In section \ref{sec:CONE}, we consider the example of an object in the the Newton-Cartan spacetime, the rotating cone.

\section{Direct and indirect isometries; chiral and achiral objects}\label{sec:DICO}

It is mandatory to know which mathematical entities can be qualified as symmetric or asymmetric.
These entities are called objects.

\begin{Definition}\label{def:D1}
An object is a function which has its input argument in a metric space. \cite{Petitjean2007}
\end{Definition}

The domain of this function is a metric space, and the space $B$ of its image values has to be defined by the user.
For example, when the metric space is $E^n$, the object can be the indicator function of a domain of $E^n$ and in this case $B=\lbrace 0;1 \rbrace$.
Several other examples of objects are given in \cite{Petitjean2007}, in the case of probability distributions, graphs, matrices, colored figures, and strings.

\begin{Definition}\label{def:D2}
An object is symmetric when it is invariant under an isometry which is not the identity. \normalfont{\cite{Petitjean2007}}
\end{Definition}

All throughout the text, the composition of an isometry with itself is called a squared isometry.
Definition \ref{def:D2} is the usual one.
The isometry group must be exhibited.

\begin{Definition}\label{def:D3}
An isometry is direct when it can be expressed as a product of squared \mbox{isometries \normalfont{\cite{Petitjean2020}}}.
An isometry which is not direct is an indirect isometry.
\end{Definition}

Definition \ref{def:D3} is not the usual one in vigor in $E^n$, but it recovers this usual \mbox{definition \cite{Petitjean2020}}.
There is no need to deal with space orientation.
Any isometry \mbox{group $G$} is the union of its subgroup $G^D$ of the direct isometries and of the complement $G^I=G \setminus G^D$ of the indirect isometries \cite{Petitjean2020}. 
The group $G^D$ contains at least the identity, and $G^I$ may be empty or not.

\begin{Definition}\label{def:D4}
An object is achiral when it is invariant under an indirect isometry.
If its symmetry group contains no indirect isometry, the object is chiral.
\end{Definition}

Definition \ref{def:D4} is the usual one, but it relies upon Definition \ref{def:D3}.
Orientability, when existing, is a property of the space.
It may be global or local.
The potential existence of chiral objects is also a property of the space, which may also stand globally or \mbox{locally \cite{Petitjean2021a}}.
Both properties depend on the space structure, but none of them is the cause of the existence of the other one.

\begin{Remark}\label{rem:R1}
None of the requirements defining the metric are needed in Definition \ref{def:D1} \normalfont{\cite{Petitjean2007}}.
\\
It follows that the four definitions \ref{def:D1}-\ref{def:D4} work even when the metric is not a true one.
\end{Remark}

This is useful in physics.
For example, in the Lorentz-Minkowski spacetime, intervals are preserved rather than distances, and the objects can be scalar fields, vector fields, or else.
More generally, intervals are preserved in quadratic spaces.
Direct and indirect isometries of the inhomogeneous Lorentz group were exhibited \cite{Petitjean2019}.
They were also exhibited for the orthogonal group in finite-dimensional real quadratic spaces \cite{Petitjean2021b}.
The symmetry invariance may be based neither on a metric nor on a pseudometric.
This is the case for the Newton-Cartan spacetime, where the invariance relies on two degenerate metrics \cite{Andringa2011}.

\section{Rules of composition of isometries}\label{sec:RUL}

The following general rules of composition of isometries are a consequence of Definition \ref{def:D3}. 

\begin{Theorem}\label{th:RUL}
(R1) The product of two direct isometries is a direct isometry.
\\
(R2) The product of a direct isometry by an indirect isometry is an indirect isometry.
\\
(R3) The product of an indirect isometry by a direct isometry is an indirect isometry.
\\
(R4) The product of two indirect isometries is either a direct isometry or an indirect isometry.
\end{Theorem}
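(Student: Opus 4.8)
The plan is to reduce everything to one structural fact: the set $G^D$ of direct isometries is a subgroup of $G$, which is forced by Definition~\ref{def:D3} alone. I would first verify the three ingredients of that claim. The identity lies in $G^D$ because $\mathrm{id} = \mathrm{id} \circ \mathrm{id}$ is a squared isometry. If $f = s_1 \circ \cdots \circ s_k$ and $g = t_1 \circ \cdots \circ t_m$ are products of squared isometries, then $f \circ g = s_1 \circ \cdots \circ s_k \circ t_1 \circ \cdots \circ t_m$ is again such a product, so $f\circ g \in G^D$; this is precisely rule (R1). Finally, $G^D$ is closed under inversion: writing each $s_i = a_i \circ a_i$ one gets $f^{-1} = s_k^{-1} \circ \cdots \circ s_1^{-1} = (a_k^{-1}\circ a_k^{-1}) \circ \cdots \circ (a_1^{-1}\circ a_1^{-1})$, once more a product of squared isometries.

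With $G^D$ a subgroup, rules (R2) and (R3) follow by contradiction. For (R2), let $f$ be direct and $g$ indirect, and suppose $f\circ g$ were direct; then $g = f^{-1}\circ(f\circ g)$ would be a product of two direct isometries, hence direct by (R1), contradicting $g \in G^I = G\setminus G^D$. For (R3), let $f$ be indirect and $g$ direct, and suppose $f\circ g$ were direct; then $f = (f\circ g)\circ g^{-1}$ would likewise be direct, a contradiction. Both arguments use only closure of $G^D$ under products and inverses.

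Rule (R4) is then immediate: for indirect $f$ and $g$, the product $f\circ g$ is an element of $G$, and $G = G^D \cup G^I$ with $G^D \cap G^I = \emptyset$, so $f\circ g$ lies in exactly one of the two. To justify the genuinely disjunctive phrasing I would append a short remark showing that both alternatives really occur: for a four-point metric space (the vertices of a non-square rectangle, say) whose isometry group is the Klein four-group, every nonidentity element has order two, so the identity is the only squared isometry, all three nonidentity elements are indirect, and the product of two of them is the third, again indirect; this is contrasted with the familiar Euclidean situation, where a product of two reflections in distinct lines is a rotation, which is direct since it is itself a square.

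I do not expect a genuine obstacle here; the whole argument is bookkeeping with Definition~\ref{def:D3} and the group axioms of $G$. The only point requiring care is not to over-claim: one should not assume that $G^I$ is a single coset of $G^D$, nor that $G^D$ is normal, since $G$ need not be generated by $G^D$ together with one indirect element — and indeed (R4) records exactly that the parity heuristic valid in $E^n$ can fail in general. So the task is to make sure each step invokes nothing beyond the definition of a direct isometry.
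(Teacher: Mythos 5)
Your proposal is correct, and it is essentially the intended argument: the paper itself gives no in-text proof (it defers to reference \cite{Petitjean2020}), but it asserts in Section~\ref{sec:DICO} exactly the structural fact you prove first, namely that $G^D$ is a subgroup of $G$, from which (R1)--(R4) follow by the coset-style contradictions you describe. Your closing example of the four-point non-square rectangle, whose isometry group is the Klein four-group with only the identity expressible as a product of squares, is a valid witness that both alternatives in (R4) genuinely occur, consistent with the paper's own later use of $\mathcal{K}=\lbrace I,P,T,PT\rbrace$ in Theorem~\ref{th:GALK}.
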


\begin{proof}
See \cite{Petitjean2020}.
\end{proof}

Rules \textit{R1}, \textit{R2} and \textit{R3} are those in vigor in $E^n$, while rule \textit{R4} must be explicited, depending on which space is considered.
In $E^n$, the product of two indirect isometries is always a direct isometry. 
Rule \textit{R4} was explicited for the inhomogeneous Lorentz group \cite{Petitjean2019}, and for the orthogonal group in finite-dimensional real quadratic spaces \cite{Petitjean2021b}.

\section{Semidirect products of isometry groups}\label{sec:SDP}

We do not yet emit any assumption about the space, apart from having an isometry group, and from this group satisfying the conditions stated in the theorems given in this section.
The symmetry invariance is not assumed to rely on a true metric.

For clarity in the statements of the theorems and by abuse of language, when we build an isometry group $G$ as an outer (that is, external) semidirect product of its subgroups, we do not distinguish an isometry of a subgroup of $G$ for the group operation in this subgroup, from this isometry in $G$ for the group operation in $G$ (this distinction is done in the proofs, however).


\begin{Theorem}\label{th:SDPD}
Let $G = N \rtimes H$ be an isometry group constructed as an outer semidirect product of a normal subgroup $N$ and a subgroup $H$. 
\\
(a) Any direct isometry of $H$ is a direct isometry of $G$.
\\
(b) Any direct isometry of $N$ is a direct isometry of $G$.
\end{Theorem}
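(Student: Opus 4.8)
The plan is to reduce both statements to a single elementary principle: the canonical injection of either factor into the outer semidirect product $G = N \rtimes H$ is a group homomorphism, and any group homomorphism sends a squared isometry to a squared isometry, hence a product of squared isometries to a product of squared isometries. Combined with Definition \ref{def:D3}, this immediately yields that a direct isometry of a factor is carried to a direct isometry of $G$.

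First I would fix notation for the outer semidirect product. Write its elements as ordered pairs $(n,h)$ with $n \in N$ and $h \in H$, with product $(n_1,h_1)(n_2,h_2) = (n_1\,\varphi_{h_1}(n_2),\,h_1 h_2)$, where $\varphi\colon H \to \mathrm{Aut}(N)$ is the defining action of the construction and in particular $\varphi_{e_H} = \mathrm{id}_N$. Let $\iota_H\colon h \mapsto (e_N,h)$ and $\iota_N\colon n \mapsto (n,e_H)$ be the two canonical injections. A direct check shows both are injective group homomorphisms into $G$; these are precisely the identifications under which the abuse of language announced just before the theorem is made, i.e.\ an isometry of $H$ (resp.\ of $N$) is regarded as the isometry $\iota_H(h)$ (resp.\ $\iota_N(n)$) of $G$.

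For part (a), let $h$ be a direct isometry of $H$. By Definition \ref{def:D3} there are isometries $h_1,\dots,h_k$ of $H$ with $h = h_1^2 \cdots h_k^2$ computed in $H$. Applying $\iota_H$ and using that a homomorphism commutes with squaring, $\iota_H(h) = \iota_H(h_1)^2 \cdots \iota_H(h_k)^2$ in $G$; thus $(e_N,h)$ is a product of squared isometries of $G$, hence a direct isometry of $G$, which is the claim. Part (b) is the same argument applied to $\iota_N$: from $n = n_1^2 \cdots n_k^2$ in $N$ one gets $(n,e_H) = (n_1,e_H)^2 \cdots (n_k,e_H)^2$ in $G$. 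Here one uses explicitly that $(n,e_H)^2 = (n\,\varphi_{e_H}(n),\,e_H) = (n^2,e_H)$, i.e.\ that the triviality of the action of $e_H$ makes squaring in the image of $\iota_N$ coincide with squaring in $N$.

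I do not expect a genuine obstacle: the only delicate points are the purely formal ones of verifying that $\iota_N$ and $\iota_H$ are homomorphisms --- it is for $\iota_N$ that $\varphi_{e_H}=\mathrm{id}_N$ is needed --- and of keeping track of the identification between an element of a factor and its image in $G$, which is exactly the bookkeeping that the abuse-of-language convention suppresses in the statement. It is worth stressing that Theorem \ref{th:SDPD} asserts only this ``upward'' implication; whether an element of $G$ lying in (the image of) $N$ or $H$ and direct in $G$ is already direct in that factor is a separate question, the analogue of rule \textit{R4} of Theorem \ref{th:RUL}, which must still be settled space by space.
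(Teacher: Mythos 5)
Your proposal is correct and follows essentially the same route as the paper: the paper likewise writes $G$ with the product $(n_1,h_1)(n_2,h_2)=(n_1\,\phi_{h_1}(n_2),h_1h_2)$, invokes the fact that the elements $(e_N,h)$ (resp.\ $(n,e_H)$) form a subgroup of $G$ canonically isomorphic to $H$ (resp.\ $N$), and concludes that a product of squares in the factor maps to a product of squares in $G$. Your version merely makes explicit the verification that the canonical injections are homomorphisms (in particular the role of $\phi_{e_H}=\mathrm{id}_N$ for part (b)), where the paper instead cites Theorem 6.5.2 of Hall; there is no substantive difference.
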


\begin{proof}
We denote by $\odot$ the group operation in $N$.
No symbol is used to denote the group opration in $H$, and no symbol is used to denote the group operation in $G$, the writing of products will be clear from the context.
Let $\phi_{h}$ be an homomorphism from $H$ to the automorphism group of $N$, such that, for all $n \in N$ and all $h \in H$, $\phi_{h}(n)$ is the conjugate of $n$ by $h$, that is, $\phi_{h}(n)=hnh^{-1}$.
Thus, for all $(n_1,h_1) \in G$ and all $(n_2,h_2) \in G$, we have $(n_1,h_1)(n_2,h_2)=(n_1 \odot \phi_{h_1}(n_2) , h_1 h_2)$ \cite{Hall1959,Robinson2003}. 
We denote by $(e_N,e_H)$ the neutral element in $G$.

(a) We consider a direct isometry $h_D$ of $H$.
From Definition \ref{def:D3}, $h_D$ is a product of squared elements of $H$.
Then, the elements $(e_N,h)$ of $G$ form a subgroup isomorphic to $H$ (see Theorem 6.5.2 in \cite{Hall1959}).
Thus, we deduce that $(e_N,h_D)$ can be written as a product of squared elements \mbox{of $G$}.

(b) We consider a direct isometry $n_D$ of $N$.
From Definition \ref{def:D3}, $n_D$ is a product of squared elements of $N$.
Then, the elements $(n,e_H)$ of $G$ form a normal subgroup isomorphic to $N$ (see Theorem 6.5.2 in \cite{Hall1959}).
Thus, we deduce that $(n_D,e_H)$ can be written as a product of squared elements of $G$.
\end{proof}

\begin{Theorem}\label{th:SDPI}
Let $G = N \rtimes H$ be an isometry group constructed as an outer semidirect product of a normal subgroup $N$ and a subgroup $H$. 
Any indirect isometry of $H$ is an indirect isometry of $G$.
\end{Theorem}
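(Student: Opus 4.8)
The plan is to argue by contradiction, using the contrapositive of the defining property of indirect isometries together with the projection homomorphism onto $H$. Suppose $h_I$ is an indirect isometry of $H$ but that $(e_N, h_I)$ is a direct isometry of $G$. By Definition \ref{def:D3}, direct means expressible as a product of squared isometries, so we may write $(e_N, h_I) = \prod_{i} (n_i, k_i)^2$ for finitely many $(n_i, k_i) \in G$. The key observation is that the second-coordinate projection $\pi : G \to H$, $(n,k) \mapsto k$, is a group homomorphism — this follows directly from the product formula $(n_1,h_1)(n_2,h_2)=(n_1 \odot \phi_{h_1}(n_2), h_1 h_2)$ recalled in the proof of Theorem \ref{th:SDPD}, since the $H$-component of the product is just $h_1 h_2$.

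The next step is to apply $\pi$ to the displayed equation. Since $\pi$ is a homomorphism, it sends squares to squares and products to products, so $\pi\bigl((e_N,h_I)\bigr) = \prod_i \pi\bigl((n_i,k_i)\bigr)^2$, that is, $h_I = \prod_i k_i^2$ in $H$. But this exhibits $h_I$ as a product of squared elements of $H$, hence $h_I$ is a direct isometry of $H$ by Definition \ref{def:D3} — contradicting the hypothesis that $h_I$ is indirect. Therefore no such expression exists, and $(e_N, h_I)$ is an indirect isometry of $G$. Finally, invoking (as in the proof of Theorem \ref{th:SDPD}(a) via Theorem 6.5.2 of \cite{Hall1959}) the identification of $h_I \in H$ with $(e_N, h_I) \in G$, we conclude that $h_I$ is an indirect isometry of $G$.

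I expect the only real subtlety to be bookkeeping rather than mathematics: one must be careful, as the paper itself flags in Section \ref{sec:SDP}, to keep straight the distinction between an element of $H$ and its image $(e_N, h)$ in $G$, and to verify that a "product of squared isometries" in $G$ really does push forward under $\pi$ to a product of squared isometries in $H$ with no leftover terms — which it does precisely because $\pi$ is a homomorphism defined on all of $G$, not merely on a subgroup. No finiteness or metric assumptions are needed; the argument is purely group-theoretic and mirrors part (a) of Theorem \ref{th:SDPD} run in reverse.
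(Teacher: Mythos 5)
Your proof is correct and is essentially the paper's own argument: the paper also assumes $(e_N,h_I)=\prod_k(n_k,h_k)^2$, reads off the $H$-component of the product to get $h_I=\prod_k h_k^2$, and derives the same contradiction; your use of the projection homomorphism $\pi$ is just a slightly more explicit packaging of that coordinate computation. No gap.
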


\begin{proof}
(a) We consider an indirect isometry $h_I \in H$ and we recall that the elements $(e_N,h)$ of $G$ form a subgroup isomorphic to $H$.
We assume that $(e_N,h_I$) can be written as a product of squared elements of $G$, that is, $(e_N,h_I)=\prod\limits_k (n_k \odot \phi_{h_k}(n_k) , h_k^2)$, where $(n_k,h_k) \in G$ and $\phi_{h_k}$ is the homorphism associated to the product $(n_k,h_k)^2$.
Then, we can write $(e_N,h_I)=(t,\prod\limits_k h_k^2)$, where $t \in N$, from which we deduce that $t=e_N$ and $h_I=\prod\limits_k h_k^2$, a contradiction.
\end{proof}

\begin{Theorem}\label{th:DPI}
Let $G = N \times H$ be an isometry group constructed as the direct product of a normal subgroup $N$ and a normal subgroup $H$, such that their intersection is the neutral element of $G$. 
\\
(a) Any indirect isometry of $H$ is an indirect isometry of $G$.
\\
(b) Any indirect isometry of $N$ is an indirect isometry of $G$.
\end{Theorem}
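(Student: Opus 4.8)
The plan is to adapt the argument of Theorem \ref{th:SDPI}, exploiting the fact that a direct product is symmetric in its two factors, which is precisely why we now obtain the conclusion for \emph{both} $N$ and $H$ rather than only for the complement. Write $G = N \times H$ with group operation taken componentwise; because $N$ and $H$ are normal in $G$ and meet only in the neutral element, this external description agrees with the given internal one, and the copies of $N$ and $H$ in $G$ commute elementwise. The key elementary observation is then that squaring acts componentwise: for every $(n,h) \in G$ we have $(n,h)^2 = (n^2, h^2)$ (this is the case of the formula in the proof of Theorem \ref{th:SDPI} in which the conjugation homomorphism $\phi$ is trivial), so any product of squared elements of $G$ has the form $\bigl( \prod_k n_k^2 , \prod_k h_k^2 \bigr)$ with $n_k \in N$ and $h_k \in H$.

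For part (a), I would let $h_I$ be an indirect isometry of $H$ and, following the convention of the paper, identify it with $(e_N, h_I) \in G$. Arguing by contradiction, suppose $(e_N, h_I)$ can be written as a product of squared elements of $G$. By the observation above it equals $\bigl( \prod_k n_k^2 , \prod_k h_k^2 \bigr)$, and comparing the $H$-components gives $h_I = \prod_k h_k^2$. By Definition \ref{def:D3} this exhibits $h_I$ as a direct isometry of $H$, contradicting the hypothesis; hence $(e_N, h_I)$ is an indirect isometry of $G$.

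Part (b) follows verbatim with $N$ and $H$ interchanged: if $n_I$ is an indirect isometry of $N$ and $(n_I, e_H)$ were a product of squared elements of $G$, comparing the $N$-components would yield $n_I = \prod_k n_k^2$, contradicting the indirectness of $n_I$ in $N$. This symmetry between the two factors is exactly what the hypothesis of a direct (rather than merely semidirect) product provides.

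I do not expect any genuine obstacle here; the only points needing a little care are the passage between the internal and external descriptions of $N \times H$ — which is what legitimises the componentwise squaring formula — and the abuse of language, already adopted in the paper, that does not notationally distinguish an isometry of a factor from its image in $G$.
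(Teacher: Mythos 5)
Your proof is correct and follows essentially the same route as the paper: the paper observes that the direct product is the special case of the semidirect product with trivial conjugation homomorphism, obtains part (a) by applying Theorem \ref{th:SDPI}, and obtains part (b) by permuting the roles of $N$ and $H$ — which is exactly the componentwise-squaring argument you spell out. Your version merely unpacks the citation of Theorem \ref{th:SDPI} into an explicit contradiction argument.
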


\begin{proof}
The direct product $N \times H$ is a special case of the semidirect product $N \rtimes H$, where the homomorphism $\phi_{h}$ returns $\phi_{h}(n)=n$ for all $n \in N$ and all $h \in H$ \cite{Robinson2003}.
\\
(a) Apply Theorem \ref{th:SDPI}.
\\
(b) Permute the roles of $H$ and $N$ and apply the proof of part (a).
\end{proof}

\section{Isometries in affine spaces}\label{sec:AFF}

First, we consider a finite-dimensional real vector space $V$ endowed with a non-degenerate quadratic form $Q$ of signature $(p,q)$.
The orthogonal group $O(p,q)$ acting on the quadratic space $(V,Q)$ contains the reflections and their products, and it acts linearly on $(V,Q)$. 
The members of $O(p,q)$ are isometries which preserve intervals.
The interval $S_{x,y}$ between two vectors $x$ and $y$ of $V$ is defined by $S_{x,y}^2=Q(x-y)$ (this quantity may be positive, negative or null).
The isometries of $O(p,q)$ are classified with Theorem \ref{th:CLASONR}, where the vector othogonal to a reflection hyperplane is called the supporting vector of this reflection.

\begin{Theorem}\label{th:CLASONR} 
A product of reflections of the orthogonal group $O(p,q)$ acting on $(V,Q)$ is classified as direct or indirect as follows.
\\
(a) When the product contains an odd number of reflections, it is an indirect isometry.

When the product contains an even number of reflections:
\\
(b) It is a direct isometry when an even number of the squares of the supporting vectors have a negative sign, and the squares of the other supporting vectors have a positive sign.
\\
(c) It is an indirect isometry when an odd number of the squares of the supporting vectors have a negative sign, and the squares of the other supporting vectors have a positive sign.
\end{Theorem}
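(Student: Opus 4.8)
The plan is to reduce the classification to two homomorphisms out of $O(p,q)$: the determinant $\det\colon O(p,q)\to\{\pm1\}$, and the sign of the spinor norm, $\theta\colon O(p,q)\to\{\pm1\}$, which sends a reflection with supporting vector $v$ to $\mathrm{sign}\,Q(v)$. Recall that a reflection $\rho_v$ (which requires $Q(v)\ne 0$) has $\det\rho_v=-1$, so $\det(\rho_{v_1}\cdots\rho_{v_k})=(-1)^k$; and it is classical that $\theta$ is well defined, i.e. $\mathrm{sign}\prod_iQ(v_i)$ does not depend on the way an isometry is written as a product of reflections. Both $\det$ and $\theta$ are group homomorphisms, hence so is $\chi=(\det,\theta)\colon O(p,q)\to\{\pm1\}\times\{\pm1\}$, and it maps every squared element $\sigma^2$ to $\chi(\sigma)^2=(+1,+1)$.

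First I would settle the ``indirect'' assertions (a) and (c). By Definition \ref{def:D3} a direct isometry of $O(p,q)$ is a finite product of squared isometries, so it lies in $\ker\chi$; equivalently, every direct isometry $\rho$ satisfies $\det\rho=+1$ and $\theta(\rho)=+1$. In case (a) the product has $\det=(-1)^k=-1$, and in case (c) it has $\theta=\mathrm{sign}\prod_iQ(v_i)=-1$; in either case $\chi(\rho)\ne(+1,+1)$, so $\rho$ is not a product of squared isometries and is therefore indirect. (This already contains, as the case where $Q$ is definite, the familiar Euclidean statement.)

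For the converse (b) I would prove that the subgroup $G^{D}$ of direct isometries of $O(p,q)$ equals $\ker\chi$. One inclusion $G^{D}\subseteq\ker\chi$ was just noted. For the reverse inclusion I would identify both with the identity component $SO^{+}(p,q)$. On one hand, $O(p,q)/SO^{+}(p,q)$ has exponent two (it is $\mathbb{Z}/2\mathbb{Z}$ when $Q$ is definite and $(\mathbb{Z}/2\mathbb{Z})^{2}$ when $p,q\ge 1$), so the square of any element of $O(p,q)$ lies in $SO^{+}(p,q)$, whence $G^{D}\subseteq SO^{+}(p,q)$. On the other hand, the squaring map $\sigma\mapsto\sigma^{2}$ on the Lie group $SO^{+}(p,q)$ has differential $2\,\mathrm{id}$ at the identity, hence is a local diffeomorphism there, so its image contains a neighbourhood of the identity; that neighbourhood generates the connected group $SO^{+}(p,q)$, and therefore $SO^{+}(p,q)\subseteq G^{D}$. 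Thus $G^{D}=SO^{+}(p,q)$. Since $\theta$ kills squares it is trivial on $G^{D}=SO^{+}(p,q)$, while $\chi$ is surjective onto $\{\pm1\}^{2}$ when $p,q\ge1$ (use one reflection with $Q(v)>0$ and one with $Q(v)<0$) and onto the relevant subgroup of order two otherwise; in all cases $[O(p,q):\ker\chi]=[O(p,q):SO^{+}(p,q)]$, and combined with $SO^{+}(p,q)\subseteq\ker\chi$ this gives $\ker\chi=SO^{+}(p,q)=G^{D}$. In case (b) the product has $\det=(-1)^k=+1$ and $\theta=\mathrm{sign}\prod_iQ(v_i)=+1$, so it lies in $\ker\chi=G^{D}$ and is direct.

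The step I expect to be the main obstacle is the well-definedness of $\theta$: that among the even-length reflection decompositions of a fixed isometry, the parity of the number of supporting vectors with negative square is an invariant of the isometry. This is the genuine classical input (the spinor norm); in a self-contained account it would be proved via the Clifford algebra or by a careful induction on reflection length, but here it is exactly Rule $R4$ for the orthogonal group in finite-dimensional real quadratic spaces and can be quoted from \cite{Petitjean2021b}. Everything else is bookkeeping with the homomorphism $\chi$ and the standard count of connected components of $O(p,q)$.
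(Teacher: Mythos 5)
Your proof is correct, but it takes a genuinely different route from the paper, whose entire proof of Theorem \ref{th:CLASONR} is a citation of Theorem 13 in \cite{Petitjean2021b} (where the classification is obtained by geometric-algebra computations on products of reflections). You instead package the two relevant invariants into the character $\chi=(\det,\theta)$ and prove the one genuinely nontrivial inclusion, $\ker\chi\subseteq G^{D}$, by Lie-theoretic means: the component group of $O(p,q)$ has exponent two, so every square lies in the identity component $SO^{+}(p,q)$, while the squaring map is a local diffeomorphism at the identity, so its image contains a neighbourhood of the identity which generates the connected group $SO^{+}(p,q)$; the index count $[O(p,q):\ker\chi]=[O(p,q):SO^{+}(p,q)]$ then forces $G^{D}=SO^{+}(p,q)=\ker\chi$. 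This buys something the paper leaves implicit, namely that ``direct isometry'' in the sense of Definition \ref{def:D3} coincides with ``lying in the identity component'' for these groups, and it cleanly separates the soft bookkeeping from the single hard input. That input --- the well-definedness of the spinor norm $\theta$ on reflection decompositions --- you correctly identify as the crux, but be careful in how you source it: it is not literally ``Rule R4'' of Theorem \ref{th:RUL}, and quoting it from \cite{Petitjean2021b} comes close to quoting the very theorem being proved, so in a self-contained write-up you should prove it independently (Clifford algebra, or the standard induction on reflection length), as you indicate. With that caveat, the argument is sound and arguably more structural than the cited one.
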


\begin{proof}
See Theorem 13 in \cite{Petitjean2021b}.
\end{proof}

No reflection is supported by a vector of null square \cite{Petitjean2021b}.
Then, a single \mbox{reflection} is an indirect isometry (Theorem \ref{th:CLASONR}a), and $O(p,q)$ is generated by the reflections (see Cartan-Dieudonn\'e theorem \cite{Artin1957,Aragon2006}).
It follows that Theorem \ref{th:CLASONR} suffices to classify all isometries of $O(p,q)$ as direct or indirect.

We wish to extend the validity of Theorem \ref{th:CLASONR} to the case of an affine space \mbox{over $(V,Q)$}.
So, we must consider the translations, because the intervals are also preserved by the members of the translation group $\mathcal{T}$ \mbox{operating} on $(V,Q)$, that is, for all vectors $x \in V$ and $y \in V$ and for all translations $\tau_t \in \mathcal{T}$ of vector $t \in V$, $Q((x+t)-(y+t))=Q(x-y)$.
\mbox{The group} operation of $\mathcal{T}$ is the addition and it is \mbox{commutative}.

We show in Theorem \ref{th:TRANS} that translations are direct isometries, then we extend the validity of Theorem \ref{th:CLASONR} in Theorem \ref{th:CLASONRT}.

\begin{Theorem}\label{th:TRANS}
All isometries of $\mathcal{T}$ are direct.
\end{Theorem}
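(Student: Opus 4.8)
The plan is to show directly that \emph{every} translation is a squared isometry, which by Definition \ref{def:D3} immediately makes it direct. The key point is that $V$ is a real vector space, so for any $t \in V$ the vector $t/2$ is again in $V$, and hence $\tau_{t/2} \in \mathcal{T}$ is a genuine isometry of the affine space over $(V,Q)$.

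First I would recall that the group operation of $\mathcal{T}$ is vector addition, so that composing $\tau_{t/2}$ with itself gives $\tau_{t/2}\,\tau_{t/2} = \tau_{t/2 + t/2} = \tau_t$. Thus $\tau_t = (\tau_{t/2})^2$ is the square of an isometry, i.e.\ a squared isometry in the ambient isometry group. Next I would invoke Definition \ref{def:D3}: a single squared isometry is (a length-one product of) squared isometries, hence direct. The identity $\tau_0$ is covered as the trivial case (e.g.\ $\tau_0 = (\tau_0)^2$, or because $G^D$ always contains the identity). Therefore every element of $\mathcal{T}$ is direct, which is the claim.

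I do not expect a genuine obstacle here; the only point deserving care is the appeal to divisibility by $2$ in $V$. Over $\mathbb{R}$ (or indeed any field of characteristic $\neq 2$) this is automatic, but it is worth flagging that the argument uses the real scalar structure, so that the statement is really a statement about affine spaces over such fields rather than over an arbitrary ring. A secondary, purely bookkeeping remark is the abuse of language announced in Section \ref{sec:SDP}: $\tau_{t/2}$ viewed inside $\mathcal{T}$ and viewed inside the full isometry group are identified, so the square computed in $\mathcal{T}$ is the same element as the square computed in $G$, and no ambiguity arises. With Theorem \ref{th:TRANS} in hand, part (b) of Theorem \ref{th:SDPD} will then let us transport this conclusion into the full affine isometry group $\mathcal{T} \rtimes O(p,q)$ in the proof of Theorem \ref{th:CLASONRT}.
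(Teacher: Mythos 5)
Your proposal is correct and is essentially the paper's own argument: both identify $\mathcal{T}$ with the additive group of the underlying real vector space and observe that $\tau_t = (\tau_{t/2})^2$, so every translation is a squared isometry and hence direct by Definition \ref{def:D3}. The only cosmetic difference is that the paper cites standard references for the isomorphism between the translation group and the additive group of $V$, whereas you use it implicitly; your remark about characteristic $\neq 2$ is a reasonable added precision.
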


\begin{proof}
It is known that two finite dimensional vector spaces over the same field are isomorphic if and only if they have the same dimension (see Theorem 1.7D in \cite{Thrall2014}). 
Then, the translation group of an affine space is isomorphic to the additive group of the underlying vector space (see Theorem 6 in \cite{Birkhoff1977}). 
Thus, the set $\mathcal{T}$ is isomorphic to $\mathbb{R}^{p,q}$ for the addition.
It follows that any translation $\tau \in \mathcal{T}$ can be written $\tau=(\tau/2)+(\tau/2)$, which means that $\tau$ is the composition of $\tau/2$ with itself.
In other words, any translation is always the square of an other translation, so it is a direct isometry.
\end{proof}

\begin{Theorem}\label{th:CLASONRT}
Let $A$ be the affine space over the finite-dimensional real quadratic space $(V,Q)$, \mbox{$Q$ being} a non-degenerate quadratic form of signature $(p,q)$, and let $\mathcal{T}$ be the translation group acting on $A$. 
The isometries of the orthogonal group $O(p,q)$ acting on $A$ are classified according to the rules (a), (b) and (c) of Theorem \ref{th:CLASONR}, the translations of $\mathcal{T}$ are direct isometries, and the compositions of isometries of $O(p,q)$ with translations are also classified according to the rules (a), (b) and (c) of Theorem \ref{th:CLASONR}.
\end{Theorem}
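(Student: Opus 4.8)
The plan is to realise the relevant isometries of $A$ as elements of an outer semidirect product and then to invoke the structural results of Section~\ref{sec:SDP} together with the composition rules of Theorem~\ref{th:RUL}. First I would record that the translations of $\mathcal{T}$ together with $O(p,q)$ acting on $A$ generate an isometry group $G$ which is the outer semidirect product $G=\mathcal{T}\rtimes O(p,q)$, with $\mathcal{T}$ normal and $O(p,q)$ acting on $\mathcal{T}$ by $\phi_g(\tau_t)=\tau_{g(t)}$; every element of $G$ is a product $\tau\,g$ of a translation with an element of $O(p,q)$, so the three families named in the statement (elements of $O(p,q)$, translations, and their compositions) already cover all of $G$.

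The second step classifies the elements of $O(p,q)$ read inside $G$. By Theorem~\ref{th:SDPD}(a), an element of $O(p,q)$ that is direct in $O(p,q)$ is direct in $G$; by Theorem~\ref{th:SDPI}, one that is indirect in $O(p,q)$ is indirect in $G$. Since $G=G^D\sqcup G^I$, these two implications together force the direct/indirect status of an element of $O(p,q)$ to be the same whether it is read in $O(p,q)$ or in $G$. As $O(p,q)$ is generated by reflections (Cartan--Dieudonn\'e), that status is therefore governed by rules (a), (b), (c) of Theorem~\ref{th:CLASONR}, which gives the first assertion.

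The third step handles the translations: Theorem~\ref{th:TRANS} says every element of $\mathcal{T}$ is direct in $\mathcal{T}$, and Theorem~\ref{th:SDPD}(b) then makes it a direct isometry of $G$, which is the second assertion. For the fourth step, a composition of an element $g\in O(p,q)$ with a translation $\tau$ is, in $G$, the product of the direct isometry $\tau$ with $g$, in either order; applying rules (R1)--(R3) of Theorem~\ref{th:RUL} shows this product is direct exactly when $g$ is direct in $G$, i.e.\ when $g$ falls under case (b) of Theorem~\ref{th:CLASONR}, and indirect exactly in cases (a) and (c). Hence every such composition is classified by the same rules, which is the third assertion.

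I expect the only delicate point to be the bookkeeping in the second step: one must use \emph{both} Theorem~\ref{th:SDPD}(a) and Theorem~\ref{th:SDPI} — neither alone pins down the status in $G$ — and combine them with the disjoint decomposition $G=G^D\sqcup G^I$. Everything else is a straightforward assembly; in particular rule (R4) is never invoked here, since all the nontrivial behaviour of products of indirect isometries has already been absorbed into Theorem~\ref{th:CLASONR}, and the translations, being direct, can only interact with an isometry's status through (R1)--(R3), which they leave unchanged.
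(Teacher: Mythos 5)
Your proposal is correct and follows essentially the same route as the paper: realise the group as the outer semidirect product $\mathcal{T}\rtimes O(p,q)$ with $O(p,q)$ acting on translations by $\tau_t\mapsto\tau_{g(t)}$, apply Theorems \ref{th:SDPD} and \ref{th:SDPI} to transfer the classification of $O(p,q)$ and $\mathcal{T}$ into the affine group, and finish the mixed compositions with rules (R1)--(R3) of Theorem \ref{th:RUL}. Your explicit remark that both Theorem \ref{th:SDPD}(a) and Theorem \ref{th:SDPI} are needed, together with the decomposition $G=G^D\sqcup G^I$, to pin down the status of an element of $O(p,q)$ inside $G$ is a useful clarification of a point the paper leaves implicit, but it is not a different argument.
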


\begin{proof}
We consider the semidirect product $G = \mathcal{T} \rtimes O(p,q)$. 
It is an affine group preserving intervals. 

Let $e_0$ be the neutral element of $\mathcal{T}$ and $e_1$ be the neutral element of $O(p,q)$.
\mbox{According} to Theorem 6.5.2 in \cite{Hall1959}, the elements $(e_0,h)$ form a subgroup of $G$, $O_G(p,q)$, which isomorphic to $O(p,q)$, and the elements $(\tau,e_1)$ form a normal subgroup $\mathcal{T}_G$ of $G$, which is isomorphic to $\mathcal{T}$, with $\mathcal{T}_G \cap O_G(p,q) = (e_0,e_1)$.

The normality of $\mathcal{T}_G$ stands for the following reason.
The elements of $O_G(p,q)$ act linearly on the vectors of $V$, so that, for all $(h,e_1) \in O_G(p,q)$, for all translations $(e_0,\tau) \in \mathcal{T}_G$ and for all $x \in V$, $((h,e_1) (e_0,\tau) (h,e_1)^{-1}) (x) = x + (e_0,h)(t)$, where $t \in V$ is the vector bijectively associated to $(e_0,\tau)$, and $h(t)$ is constant in $V$, that is, $h(t)$ does not depend on $x$. 

Then, to extend the validity of Theorems \ref{th:CLASONR} and \ref{th:TRANS}, apply Theorems \ref{th:SDPD} and \ref{th:SDPI}.
Finally, due to rules (R2) and (R3) of Theorem \ref{th:RUL}, the composition on the left or on the right of linear isometries with translations still follow the rules of Theorem \ref{th:CLASONR}.
\end{proof}

The isometries of $A$ being generated by the translations and by the reflections, it follows that Theorem \ref{th:CLASONRT} suffices to classify all isometries of $A$ as direct or indirect.

\begin{Corollary}\label{cor:CLASONRTE}
In the case of the Euclidean space $E^n$, the rules of composition of isometries of Theorem \ref{th:RUL} are completed such that the composition of two indirect isometries is always a direct isometry.
\end{Corollary}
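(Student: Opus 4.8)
The plan is to specialize Theorem \ref{th:CLASONRT} to the Euclidean space $E^n$, which is the affine space over the quadratic space $(\mathbb{R}^n, Q)$ with $Q$ the standard positive-definite quadratic form, i.e.\ of signature $(n,0)$. First I would observe that in this signature every supporting vector has positive square, since $Q$ is positive-definite and no reflection is supported by a null vector. Consequently, in rules (b) and (c) of Theorem \ref{th:CLASONR}, the number of supporting vectors with negative square is always zero, which is even, so rule (c) can never apply to a product of an even number of reflections; every such product falls under rule (b) and is therefore a direct isometry.

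Next I would combine this with rule (a) of Theorem \ref{th:CLASONR}: a product of reflections in $O(n,0)$ is indirect precisely when it contains an odd number of reflections, and direct precisely when it contains an even number. By the Cartan–Dieudonné theorem, every element of $O(n,0)$ is such a product, so an isometry of $O(n,0)$ is indirect if and only if it is a product of an odd number of reflections. Two indirect linear isometries are thus each a product of an odd number of reflections, so their composition is a product of an even number of reflections, hence direct by rule (b).

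Finally I would account for translations using Theorem \ref{th:CLASONRT}: every isometry of $E^n$ is uniquely a translation composed with a linear isometry in $O(n,0)$, and translations are direct by Theorem \ref{th:TRANS}. Writing two arbitrary indirect isometries of $E^n$ as $\tau_1 \circ \ell_1$ and $\tau_2 \circ \ell_2$ with $\tau_i$ translations and $\ell_i \in O(n,0)$, rules (R2) and (R3) of Theorem \ref{th:RUL} force each $\ell_i$ to be indirect (since $\tau_i$ is direct). Their composition rearranges, using the semidirect product structure, into a translation composed with $\ell_1' \circ \ell_2$ for some linear isometry $\ell_1'$ conjugate to $\ell_1$ (hence still indirect), and $\ell_1' \circ \ell_2$ is a product of an even number of reflections, hence direct; composing with the direct translation keeps it direct by (R1). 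This establishes that the composition of any two indirect isometries of $E^n$ is direct, completing rule (R4) in this case.

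The main obstacle is not any deep computation but rather being careful that rule (b) of Theorem \ref{th:CLASONR} genuinely covers \emph{all} even products in signature $(n,0)$: one must check that "an even number of negative-square supporting vectors" is vacuously satisfied (zero is even) and that no hidden case escapes because some supporting vector could be null — this is ruled out by the cited fact that no reflection is supported by a vector of null square. Once that observation is in place, the rest is a routine assembly of Theorems \ref{th:CLASONR}, \ref{th:TRANS}, \ref{th:CLASONRT} and the rules (R1)--(R3).
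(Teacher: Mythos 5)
Your proposal is correct and follows essentially the same route as the paper: specialize Theorem \ref{th:CLASONRT} to signature $(n,0)$, observe that positive-definiteness excludes supporting vectors of negative square so case (c) of Theorem \ref{th:CLASONR} cannot occur, and handle translations as direct isometries. (One cosmetic slip: in the semidirect product the composition $(\tau_1\circ\ell_1)\circ(\tau_2\circ\ell_2)$ has linear part exactly $\ell_1\circ\ell_2$ --- it is the translation $\tau_2$, not $\ell_1$, that gets conjugated --- but this does not affect your conclusion.)
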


\begin{proof}
The affine space $E^n$ is isomorphic to $A$ in Theorem \ref{th:CLASONRT} when $q=0$ in the signature of the non-degenerate quadratic form $Q$.
This latter is positive definite, it exists no reflection with a supporting vector of negative square, and case (c) of Theorem \ref{th:CLASONR} cannot occur.
Translations are still direct isometries, as in the general case where $q>0$.
\end{proof}

\section{The real Lorentz-Minkowski spacetime and the Poincar\'e group}\label{sec:MINK}

The Poincar\'e group, also called the inhomogeneous Lorentz group, is the largest isometry group of Lorentz-Minkowski spacetime. 
It is the semidirect product of the Abelian translation group (in time and space) and of the Lorentz group \cite{Carmeli2000,Muller-Kirsten2010}.
The Lorentz group is a four components Lie group which is generated by spatial rotations, parity inversion $P$ (which turns all spatial coordinates into their opposite), time reflection $T$ (time reversal), and boosts, which are transformations connecting two uniformly moving bodies. 
The Minkowski metric, which is a pseudo-metric, is diagonal.
Depending on the authors, it is induced by a quadratic form of signature either $(1,3)$ or $(3,1)$, so that the Poincaré group is either $\mathbb{R}^{1,3} \rtimes O(1,3)$ \cite{Carmeli2000,Muller-Kirsten2010}, or $\mathbb{R}^{3,1} \rtimes O(3,1)$ \cite{McCabe2007}, while this difference is purely conventional in physics.

The semidirect product structure of the Poincaré group permits to classify its isometries as direct or indirect according to Theorem \ref{th:CLASONRT}, with either $(p,q)=(1,3)$ or $(p,q)=(3,1)$.
From Theorem \ref{th:SDPD}, except the involutions $P$ and $T$, the generators of the Lorentz group are direct isometries of the Poincar\'e group.
\mbox{In particular}, Lorentz boosts are direct isometries of the Lorentz group (\mbox{Corollary 1} \mbox{in \cite{Petitjean2019}}), so they are also direct isometries of the Poincar\'e group.
From Theorem \ref{th:SDPI}, the single reflections $P$ and $T$ are indirect isometries of the Poincar\'e group.

An interesting consequence is that the involution $PT=TP$ is an indirect isometry of the Poincar\'e group (case (c) of \mbox{Theorem \ref{th:CLASONR}}).
This classification of $PT$ as an indirect isometry was previously established using a matrix representation of the Lorentz group \cite{Petitjean2019}, and so we retrieve that $PT$ is an indirect isometry of the Poincar\'e group with Theorem \ref{th:SDPI}.
Both $P$ and $T$ are indirect isometries, but considering the full reflection $PT$ as a direct isometry would be an erroneous conclusion because the Lorentz-Minkowski spacetime is not Euclidean.
Furthermore, in the context of a chirality analysis, there is no reason to consider that the combination of a mirroring in space with a mirroring in time should be classified like a rotation. 

\section{The Newton-Cartan spacetime and the inhomogeneous Galilean group}\label{sec:GAL}

In Newton-Cartan theory, the spacetime is equipped with two diagonal degenerate metrics: a spatial metric of diagonal elements $(0,1,1,1)$ and a temporal metric of diagonal elements $(1,0,0,0)$ \cite{Andringa2011,Callender2017}.
In fact, there is no place for defining a four-dimensional metric \cite{Leihkauf1989}.
When a spatial distance is not null, the temporal distance is null, and when the temporal distance is not null, the spatial distance is null \cite{Callender2017}.
Speaking about the distance between two spatial points at different times is meaningless, and speaking about the time interval between two distinct spatial points is meaningless, too.
So, the Newton-Cartan spacetime is not a quadratic space.

The isometry group of the Newton-Cartan spacetime is the Galilean group $Gal(1,3)$.
It contains space and time translations, spatial rotations and boosts, also called pure Galilean transformations.
The homogeneous Galilean group $HGal(1,3)$ is generated by spatial rotations and boosts.
Galilean boosts differ from Lorentz boosts.
The composition of Galilean boosts is commutative and it is a Galilean boost \cite{deMontigny2006}, while the composition of Lorentz boosts is in general not a Lorentz boost because it is equivalent to a Lorentz boost preceded by a spatial rotation \cite{Ungar1988}.

\begin{Theorem}\label{th:CLASONRNC1}
All isometries of $Gal(1,3)$ are direct.
\end{Theorem}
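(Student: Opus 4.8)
The plan is to exhibit $Gal(1,3)$ as an iterated outer semidirect product and then invoke Theorems \ref{th:SDPD} and \ref{th:TRANS}. Write $Gal(1,3) = \mathcal{T} \rtimes HGal(1,3)$, where $\mathcal{T}$ is the Abelian group of the space and time translations (which is normal) and $HGal(1,3)$ is the homogeneous Galilean group; and in turn write $HGal(1,3) = \mathcal{B} \rtimes SO(3)$, where $\mathcal{B}$ is the Abelian group of the Galilean boosts (which is normal in $HGal(1,3)$, since conjugating the boost of velocity $\mathbf{v}$ by a spatial rotation $R$ yields the boost of velocity $R\mathbf{v}$) and $SO(3)$ is the group of the spatial rotations. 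By Remark \ref{rem:R1}, Definition \ref{def:D3} applies although the Newton-Cartan spacetime carries no true metric, so it suffices to show that every element of $Gal(1,3)$ is a product of squared isometries.

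First I would treat the three building blocks. The translation group $\mathcal{T}$ is isomorphic to $\mathbb{R}^4$ for the addition, so, exactly as in the proof of Theorem \ref{th:TRANS}, each translation $\tau$ equals the composition of $\tau/2$ with itself and is therefore direct. The same argument applies to the boost group $\mathcal{B}$, which is isomorphic to $\mathbb{R}^3$ for the addition since the composition of boosts is commutative and is again a boost: every boost is the square of the half-velocity boost, hence direct. For $SO(3)$, every rotation is the square of a rotation, namely the half-angle rotation about the same axis (equivalently, one may use surjectivity of the exponential map), so every spatial rotation is direct as well.

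Then I would propagate these facts up the two semidirect products. Since $HGal(1,3)$ is generated by boosts and rotations, and each such generator is a product of squared isometries within $\mathcal{B}$ or within $SO(3)$, parts (a) and (b) of Theorem \ref{th:SDPD} show that each is a product of squared isometries of $HGal(1,3)$; hence every element of $HGal(1,3)$, being a product of such generators, is a product of squared isometries of $HGal(1,3)$, i.e. direct. Applying Theorem \ref{th:SDPD} once more to $Gal(1,3) = \mathcal{T} \rtimes HGal(1,3)$: part (b) makes every translation direct in $Gal(1,3)$, and part (a) makes every element of $HGal(1,3)$ direct in $Gal(1,3)$. Finally, any element of $Gal(1,3)$ is the product of a translation and an element of $HGal(1,3)$, hence a product of squared isometries of $Gal(1,3)$, which proves the claim.

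The main obstacle is not any single computation but making sure the group-theoretic scaffolding is airtight: that $\mathcal{B}$ is genuinely normal in $HGal(1,3)$ and $\mathcal{T}$ genuinely normal in $Gal(1,3)$, so that the hypotheses of Theorem \ref{th:SDPD} are met; that the ``half'' element used to realise each boost and each translation as a square indeed lies in the same group; and that $SO(3)$ contains no reflection. It is precisely the exclusion of the spatial reflections and of time reversal from $Gal(1,3)$ that makes the statement true, which is why the full Galilean group considered in Section \ref{sec:FGAL} will require a separate, more delicate analysis.
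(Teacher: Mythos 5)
Your proposal is correct and follows essentially the same route as the paper: decompose $Gal(1,3)$ as the iterated semidirect product $\mathcal{T} \rtimes (\mathcal{B} \rtimes SO(3))$, show each factor consists of squares (half-translation, half-velocity boost, half-angle rotation), lift directness via Theorem \ref{th:SDPD}, and conclude with the closure of direct isometries under composition (rule R1 of Theorem \ref{th:RUL}). The only difference is cosmetic: you spell out the normality checks and the final factorization slightly more explicitly than the paper does.
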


\begin{proof}
Spatial rotations are direct isometries because a rotation is always the square of two rotations \cite{Petitjean2020}.

We denote by $(t,x)$ an element of the Newton-Cartan spacetime.
A Galilean boost $b_v$ of direction $v$, where $v \in \mathbb{R}^3$, transforms $(t,x)$ into $(t,x+tv)$ \cite{deMontigny2006}.
A Galilean boost $b_v$ is always equal to the composition of $b_{v/2}$ with itself, and so, according to \mbox{Definition \ref{def:D3}}, $b_v$ is a direct isometry.

The composition of Galilean boosts defines an Abelian subgroup of $HGal(1,3)$, and $HGal(1,3)$ is a semidirect product of the Abelian group of Galilean boosts and of the spatial rotation group \cite{deMontigny2006}.
So, from Theorem \ref{th:SDPD}, rotations and boosts are direct isometries of $HGal(1,3)$.

Translations in space and time are direct isometries (the proof is similar to the one of Theorem \ref{th:TRANS}).
The inhomogeneous Galilean group $Gal(1,3)$ is a semidirect product of the Abelian space and time translation group and of the homogeneous Galilean group \mbox{$HGal(1,3)$ \cite{deMontigny2006}}.
So, from Theorem \ref{th:SDPD}, spatial rotations and boosts are direct isometries of $Gal(1,3)$, and it follows from Theorem \ref{th:RUL} that all isometries of $Gal(1,3)$ are direct.
\end{proof}

\section{The full isometry group of the Newton-Cartan spacetime}\label{sec:FGAL}

We observe that $P$ and $T$ leave invariant the distances defined by the two degenerate metrics mentioned in section \ref{sec:GAL}.
So, we must include $P$ and $T$ as generators of the full isometry group of the Newton-Cartan spacetime, that we denote by $GAL(1,3)$.
It is generated by parity inversion $P$ and time reversal $T$, by the set $\mathcal{R}$ of spatial rotations, the set $\mathcal{B}$ of boosts, and the set $\mathcal{T}$ of space and time translations.
We build $GAL(1,3)$ via successive inclusions of its subset of isometries.

\begin{Theorem}\label{th:GALK}
The set $\mathcal{K}=\lbrace I,P,T,PT \rbrace$, where $I$ is the identity, is an Abelian group isomorphic to the Klein four-group, that is, the direct product of two copies of the cyclic group of order $2$.
Except $I$, it contains only indirect isometries, which are $P$, $T$ and $PT$.
\end{Theorem}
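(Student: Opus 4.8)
The plan is to treat $\mathcal{K}$ first as an abstract group and then to classify its four elements by applying Definition \ref{def:D3} directly, since $\mathcal{K}$ is small enough that the set of its squared isometries can be written down explicitly.

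First I would fix coordinates $(t,x_1,x_2,x_3)$ on the Newton-Cartan spacetime and represent the two generators as linear maps: $P=\mathrm{diag}(1,-1,-1,-1)$ (it negates the three spatial coordinates) and $T=\mathrm{diag}(-1,1,1,1)$ (it negates the time coordinate). Diagonal maps commute, so $PT=TP=\mathrm{diag}(-1,-1,-1,-1)$, and each of $P$, $T$, $PT$ squares to the identity $I$. Hence $\mathcal{K}=\lbrace I,P,T,PT\rbrace$ is closed under composition, every element is its own inverse, and $\lvert\mathcal{K}\rvert=4$. A group of order $4$ is either cyclic or isomorphic to $\mathbb{Z}_2\times\mathbb{Z}_2$; since $\mathcal{K}$ has no element of order $4$, it is the Klein four-group. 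That $P$ and $T$, hence $PT$, are isometries of the Newton-Cartan spacetime was already observed at the beginning of this section, so $\mathcal{K}$ is indeed an isometry group.

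Next I would classify the elements with Definition \ref{def:D3}. Because every element of $\mathcal{K}$ is an involution, the set of squared isometries of $\mathcal{K}$ is $\lbrace I^2,P^2,T^2,(PT)^2\rbrace=\lbrace I\rbrace$, so every product of squared isometries of $\mathcal{K}$ equals $I$. Thus $\mathcal{K}^D=\lbrace I\rbrace$, and $P$, $T$, $PT$ are indirect isometries of $\mathcal{K}$. I do not expect a genuine obstacle here; the only point demanding care is that directness is always relative to the ambient isometry group, so at this stage the classification must be made inside $\mathcal{K}$ itself. The subsequent theorems of this section will enlarge $\mathcal{K}$ by the rotations $\mathcal{R}$, the boosts $\mathcal{B}$ and the translations $\mathcal{T}$ to form $GAL(1,3)$, and Theorems \ref{th:SDPI} and \ref{th:DPI} will then be the tools that keep $P$, $T$ and $PT$ indirect in that larger group. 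It is also worth remarking, as the paper does for $PT$ in Minkowski spacetime, that although $PT$ acts as $-I$ on $\mathbb{R}^4$, which would be a rotation in a Euclidean space of even dimension, Definition \ref{def:D3} keeps it indirect here precisely because $\mathcal{K}$ contains no square root of it.
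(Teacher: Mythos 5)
Your proposal is correct and takes essentially the same route as the paper: both verify that $P$, $T$ and $PT=TP$ are commuting involutions so that $\mathcal{K}$ is the Klein four-group, and both classify $P$, $T$, $PT$ as indirect by observing that every square in $\mathcal{K}$ equals $I$, so no non-identity element is a product of squared isometries. Your added remark that the classification is made inside $\mathcal{K}$ and is later transferred to $GAL(1,3)$ via Theorems \ref{th:SDPI} and \ref{th:DPI} is a correct and useful clarification, but does not change the argument.
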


\begin{proof}
For all vectors $(t,x)$, $(TP)(t,x)=(PT)(t,x)=(-t,-x)$, so $PT=TP$.
Both $P$ and $T$ are involutions, therefore $PT=TP$ is also an involution.
Then it can be checked that $\mathcal{K}$ is an Abelian group.
The isomorphism of $\mathcal{K}$ with the Klein four-group can be deduced from their Cayley tables.

All elements of $\mathcal{K}$ have a square equal to $I$, so, except $I$ itself, none of them can be written as product of squares in $\mathcal{K}$.
\end{proof}

\begin{Theorem}\label{th:GALQ}
(a) The isometry group $\mathcal{Q} = \mathcal{K} \cup \mathcal{R}$ is the direct product of $\mathcal{K}$ and $\mathcal{R}$, \mbox{that is}, \mbox{$\mathcal{Q} = \mathcal{K} \times \mathcal{R}$}.
\\
(b) The direct symmetries of $\mathcal{R}$ are direct symmetries of $\mathcal{Q}$.
\\
(c) The indirect symmetries of $\mathcal{K}$ are indirect symmetries of $\mathcal{Q}$.
\end{Theorem}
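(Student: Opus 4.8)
The plan is to prove (a) directly from the internal criterion for a direct product and then to read off (b) and (c) from the results of Section~\ref{sec:SDP}. Write an element of the Newton-Cartan spacetime as $(t,x)$ with $x \in \mathbb{R}^3$. A spatial rotation $r \in \mathcal{R}$ acts by $r(t,x) = (t, r_0 x)$ for some $r_0 \in SO(3)$, while $P(t,x) = (t,-x)$, $T(t,x) = (-t,x)$ and $PT(t,x) = (-t,-x)$. Since each of these maps leaves invariant both degenerate metrics of Section~\ref{sec:GAL}, the subgroup $\mathcal{Q}$ they generate is indeed a group of isometries, which is what makes the statement meaningful.

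First I would check that every element of $\mathcal{K}$ commutes with every element of $\mathcal{R}$: because rotations act trivially on the time coordinate and because $r_0(-x) = -r_0 x$ by linearity of $r_0$, one gets $kr = rk$ for all $k \in \mathcal{K}$ and $r \in \mathcal{R}$. Next I would note that $\mathcal{K} \cap \mathcal{R} = \lbrace I \rbrace$: none of $P$, $T$, $PT$ restricts to a spatial rotation, since $-\mathrm{Id}$ acting on $\mathbb{R}^3$ has determinant $-1$ and the time coordinate is not fixed by $T$ or $PT$. With elementwise commutativity and trivial intersection in hand, the product set $\mathcal{K}\mathcal{R} = \lbrace kr : k \in \mathcal{K},\ r \in \mathcal{R} \rbrace$ is closed under composition and inversion (one rearranges factors using commutativity), hence it is exactly the subgroup $\mathcal{Q}$ generated by $\mathcal{K} \cup \mathcal{R}$; both $\mathcal{K}$ and $\mathcal{R}$ are normal in $\mathcal{Q}$, again by commutativity; and each element of $\mathcal{Q}$ has a unique expression $kr$ because the intersection is trivial. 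This is precisely the internal direct product situation, so $\mathcal{Q} \cong \mathcal{K} \times \mathcal{R}$, which proves (a).

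For (b), a direct product is a semidirect product with trivial action, so Theorem~\ref{th:SDPD} applies with $\mathcal{R}$ in the role of the subgroup $H$ (or $N$): every direct isometry of $\mathcal{R}$ is a direct isometry of $\mathcal{Q}$. For (c), Theorem~\ref{th:DPI} applies with $\mathcal{K}$ as one of the two normal factors; by Theorem~\ref{th:GALK} the indirect isometries of $\mathcal{K}$ are exactly $P$, $T$ and $PT$, and Theorem~\ref{th:DPI}(a) (equivalently (b)) makes each of them an indirect isometry of $\mathcal{Q}$.

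The only delicate point is in (a): the statement writes $\mathcal{Q}$ as the union $\mathcal{K} \cup \mathcal{R}$, so one must first agree that this stands for the group generated by that union, and then identify that group with the product set $\mathcal{K}\mathcal{R}$ — and this identification, together with the normality of both factors, rests entirely on the elementwise commutativity of $\mathcal{K}$ and $\mathcal{R}$. Once that single fact is established, everything else is routine verification on the four coordinates $(t,x)$, and (b) and (c) are immediate consequences of the general machinery of Section~\ref{sec:SDP}.
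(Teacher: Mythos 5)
Your proposal is correct and follows essentially the same route as the paper: establish elementwise commutativity of $\mathcal{K}$ and $\mathcal{R}$ by direct computation on the coordinates $(t,x)$, observe that the intersection is trivial and that both factors are normal, and then invoke Theorem~\ref{th:SDPD} (viewing the direct product as a semidirect product with trivial action) for part (b) and Theorem~\ref{th:DPI} for part (c). Your additional care in identifying the group generated by $\mathcal{K} \cup \mathcal{R}$ with the product set $\mathcal{K}\mathcal{R}$ is a welcome precision that the paper leaves implicit, but it does not change the substance of the argument.
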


\begin{proof}
(a) $\mathcal{K}$ is a group (Theorem \ref{th:GALK}) and so it is a subgroup of $\mathcal{Q}$, $\mathcal{R}$ is a group (see \mbox{Section \ref{sec:GAL}}) and so it a subgroup of $\mathcal{Q}$, and the intersection of $\mathcal{K}$ and $\mathcal{R}$ is the identity.

For all $r \in \mathcal{R}$ and for all vectors $(t,x)$, we observe that $(Pr)(t,x)=(rP)(t,x)$, therefore we have $Pr=rP$.
Similarly, we have $Tr=rT$, and $(PT)r=r(PT)$.
\\
All elements of $\mathcal{K}$ commute with all elements of $\mathcal{R}$.
\\
We deduce that both $\mathcal{K}$ and $\mathcal{R}$ are normal subgroups of $\mathcal{Q}$.

(b) We notice that a direct product of groups is a special case of a semidirect product of groups, and we apply Theorem \ref{th:SDPD}.

(c) Apply Theorem \ref{th:DPI}.
\end{proof}

Theorems \ref{th:GALK} and \ref{th:GALQ} should not be confused with the theorem in vigor in $E^4$ stating that the product of two indirect isometries is always a direct isometry.
The deep reason of this difference comes from the fact that not all rotations of $E^4$ are in $\mathcal{Q}$.

\begin{Theorem}\label{th:GALB}
(a) The isometry group $\mathcal{L} = \mathcal{B} \cup \mathcal{Q}$ is an outer semidirect product of $\mathcal{B}$ and $\mathcal{Q}$, \mbox{that is}, \mbox{$\mathcal{L} = \mathcal{B} \rtimes \mathcal{Q}$}.
\\
(b) The direct symmetries of $\mathcal{Q}$ are direct symmetries of $\mathcal{L}$.
\\
(c) The indirect symmetries of $\mathcal{Q}$ are indirect symmetries of $\mathcal{L}$.
\end{Theorem}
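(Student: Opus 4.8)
The plan is to establish (a) as an internal semidirect product decomposition, and then to obtain (b) and (c) as immediate consequences of Theorems \ref{th:SDPD} and \ref{th:SDPI}. First I would record that $\mathcal{B}$ is an Abelian group (recalled in Section \ref{sec:GAL}) and that $\mathcal{Q}$ is a group (Theorem \ref{th:GALQ}), so that both are subgroups of the isometry group $\mathcal{L}$ generated by $\mathcal{B} \cup \mathcal{Q}$; note all generators preserve the two degenerate metrics of Section \ref{sec:GAL}, so $\mathcal{L}$ is indeed an isometry group of the Newton-Cartan spacetime. To identify $\mathcal{L}$ with $\mathcal{B} \rtimes \mathcal{Q}$ it then suffices to check that $\mathcal{B} \cap \mathcal{Q} = \lbrace I \rbrace$ and that $\mathcal{B}$ is normalized by $\mathcal{Q}$ (it is trivially normalized by itself, being Abelian), so that $\mathcal{B}$ is a normal subgroup of $\mathcal{B}\mathcal{Q} = \mathcal{L}$ and the internal semidirect product is isomorphic to the outer one.

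For the trivial intersection, writing a point of the spacetime as $(t,x)$ with $x \in \mathbb{R}^3$ and a boost as $b_v : (t,x) \mapsto (t, x+tv)$, a nontrivial boost alters the spatial part by a term proportional to $t$ while fixing $t$, whereas every element of $\mathcal{Q} = \mathcal{K} \times \mathcal{R}$ either reverses $t$ (the ones involving $T$) or leaves $t$ fixed while acting on $x$ by $\pm R$ for a rotation matrix $R$; matching these against $b_v$ for all $(t,x)$ forces $v=0$.

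The normality is the computational heart of (a): I would compute the conjugates of an arbitrary boost $b_v$ by the generators of $\mathcal{Q}$. For a spatial rotation $r$ of matrix $R$ one gets $r\,b_v\,r^{-1} = b_{Rv}$; for parity $P$ one gets $P\,b_v\,P^{-1} = b_{-v}$; for time reversal $T$ one gets $T\,b_v\,T^{-1} = b_{-v}$; and hence $(PT)\,b_v\,(PT)^{-1} = b_v$. Each conjugate is again a boost, so conjugation by any element of $\mathcal{Q}$ maps $\mathcal{B}$ into $\mathcal{B}$, which gives the normality and completes (a). I expect this short batch of conjugation computations — together with the bookkeeping that $\mathcal{R}$ and $\mathcal{K}$ really generate $\mathcal{Q}$, so that checking the generators suffices — to be the only real work; everything is elementary since Galilean boosts act affinely and their composition is commutative.

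Finally, parts (b) and (c) follow with no further effort. Applying Theorem \ref{th:SDPD}(a) to the decomposition $\mathcal{L} = \mathcal{B} \rtimes \mathcal{Q}$, with $N = \mathcal{B}$ and $H = \mathcal{Q}$, shows that every direct isometry of $\mathcal{Q}$ is a direct isometry of $\mathcal{L}$, which is (b); applying Theorem \ref{th:SDPI} to the same decomposition shows that every indirect isometry of $\mathcal{Q}$ is an indirect isometry of $\mathcal{L}$, which is (c). In particular, combining this with Theorems \ref{th:GALK} and \ref{th:GALQ}(c), the isometries $P$, $T$ and $PT$ remain indirect in $\mathcal{L}$.
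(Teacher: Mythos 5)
Your proposal is correct and follows essentially the same route as the paper: identify $\mathcal{B}$ and $\mathcal{Q}$ as subgroups with trivial intersection, verify that conjugation by $\mathcal{Q}$ sends boosts to boosts so that $\mathcal{B}$ is normal, and then invoke Theorems \ref{th:SDPD} and \ref{th:SDPI} for (b) and (c). If anything, your generator-by-generator conjugation is slightly more careful than the paper's single formula $(q_\Omega b_v q_\Omega^{-1})(t,x)=(t,x+t\Omega v)$, which does not track the sign flip coming from time reversal (you correctly get $T b_v T^{-1}=b_{-v}$), though the conclusion that the conjugate is a boost is unaffected.
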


\begin{proof}
(a) $\mathcal{B}$ is an Abelian group (see Section \ref{sec:GAL}) and so it is a subgroup of $\mathcal{L}$.
$\mathcal{Q}$ is a group (\mbox{Theorem \ref{th:GALQ}}) and so it is a subgroup of $\mathcal{L}$.
The intersection of $\mathcal{B}$ and $\mathcal{Q}$ is the identity. 

We denote by $q_\Omega$ an element of $\mathcal{Q}$ defined by its parameter $\Omega$.
These two entities must not be confused: $q_\Omega$ acts on the vectors $(t,x) \in \mathbb{R}^4$, while $\Omega$ acts on the spatial component $x \in \mathbb{R}^3$ of $(t,x)$ and returns its image in $\mathbb{R}^3$.

For all $b_v \in \mathcal{B}$ of direction $v \in \mathbb{R}^3$, and for all $q_\Omega \in \mathcal{Q}$ and for all vectors $(t,x)$, we observe that $(q_\Omega b_v q_\Omega^{-1})(t,x)=(t,x+t \Omega v))$.
Then, $t$ being a real, we deduce that $q_\Omega b_v q_\Omega^{-1}$ is a boost of direction $t \Omega v$, and it follows that $\mathcal{B}$ is a normal subgroup of $\mathcal{L}$.

(b) Apply Theorem \ref{th:SDPD}.

(c) Apply Theorem \ref{th:SDPI}.
\end{proof}

\begin{Theorem}\label{th:GAL}
(a) The full isometry group $GAL(1,3)$ is an outer semidirect product of $\mathcal{T}$ and $\mathcal{L}$, that is, $GAL(1,3) = \mathcal{T} \rtimes \mathcal{L}$.
\\
(b) The direct symmetries of $\mathcal{L}$ are direct symmetries of $GAL(1,3)$.
\\
(c) The indirect symmetries of $\mathcal{L}$ are indirect symmetries of $GAL(1,3)$.
\end{Theorem}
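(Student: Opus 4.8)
The plan is to mirror the arguments already used for Theorems \ref{th:GALQ} and \ref{th:GALB}, since $GAL(1,3)$ is obtained as one further semidirect extension, this time by the translation group $\mathcal{T}$. So parts (b) and (c) will reduce immediately to the semidirect-product machinery of Section \ref{sec:SDP}, and the only real work is in part (a).

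For part (a), I would first recall that $\mathcal{T}$, the group of space and time translations, is Abelian (hence a subgroup of $GAL(1,3)$), and that $\mathcal{L}$ is a group by Theorem \ref{th:GALB} (hence a subgroup). Their intersection is the identity: every generator of $\mathcal{L}$, namely the rotations of $\mathcal{R}$, the boosts of $\mathcal{B}$, and the involutions $P$ and $T$, acts linearly on $\mathbb{R}^4=\lbrace(t,x)\rbrace$ (note in particular that a boost $b_v(t,x)=(t,x+tv)$ is a linear shear), so every element of $\mathcal{L}$ fixes $(0,0)$, whereas no nontrivial translation does. Next I would check that $\mathcal{T}$ is normal in $GAL(1,3)$: writing $\tau_s$ for the translation by a vector $s\in\mathbb{R}^4$ and $\ell$ for the linear map on $\mathbb{R}^4$ induced by an arbitrary element of $\mathcal{L}$, one computes $(\ell\,\tau_s\,\ell^{-1})(t,x)=(t,x)+\ell(s)$ for all $(t,x)$, so the conjugate is $\tau_{\ell(s)}$, again an element of $\mathcal{T}$. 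Hence $\mathcal{T}$ is normal, and since every word in the generators $P,T,\mathcal{R},\mathcal{B},\mathcal{T}$ is an affine map whose linear part lies in $\mathcal{L}$ and whose translation part lies in $\mathcal{T}$, one obtains the decomposition $GAL(1,3)=\mathcal{T}\,\mathcal{L}$ with $\mathcal{T}\cap\mathcal{L}=\lbrace I\rbrace$, that is, $GAL(1,3)=\mathcal{T}\rtimes\mathcal{L}$.

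For part (b) I would invoke Theorem \ref{th:SDPD}: in the semidirect product $\mathcal{T}\rtimes\mathcal{L}$, every direct isometry of the factor $\mathcal{L}$ is a direct isometry of $GAL(1,3)$. For part (c) I would invoke Theorem \ref{th:SDPI}: every indirect isometry of $\mathcal{L}$ remains indirect in $GAL(1,3)$.

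The main obstacle will be the bookkeeping in part (a): making it rigorous that the group generated by the listed generators coincides with the set of affine maps $(t,x)\mapsto\ell(t,x)+s$ with $\ell\in\mathcal{L}$ and $s\in\mathbb{R}^4$, rather than something larger. This is the familiar splitting of an affine group into linear part and translation part: the generators other than translations are linear, the composition of an affine map with a translation is affine with the same linear part, and conjugating a translation by a linear map again yields a translation; so an induction on word length keeps everything inside $\mathcal{T}\,\mathcal{L}$, and uniqueness of the decomposition follows from $\mathcal{T}\cap\mathcal{L}=\lbrace I\rbrace$. Everything beyond that is a direct appeal to Theorems \ref{th:SDPD} and \ref{th:SDPI}.
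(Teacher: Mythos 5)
Your proof is correct and follows essentially the same route as the paper: exhibit $\mathcal{T}$ and $\mathcal{L}$ as subgroups with trivial intersection, show that $\mathcal{T}$ is normal by conjugating translations with elements of $\mathcal{L}$, and then obtain parts (b) and (c) by direct appeal to Theorems \ref{th:SDPD} and \ref{th:SDPI}. The only difference is one of bookkeeping: you verify the trivial intersection and the normality uniformly from the linearity of every element of $\mathcal{L}$ (each fixes the origin and conjugates $\tau_s$ to $\tau_{\ell(s)}$), whereas the paper performs the same checks case by case for the boosts of $\mathcal{B}$ and the elements of $\mathcal{Q}$.
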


\begin{proof}
(a) $\mathcal{T}$ is a group (see Section \ref{sec:GAL}) and so it is a subgroup of $GAL(1,3)$, $\mathcal{L}$ is a group (\mbox{Theorem \ref{th:GALB}}) and so it is a subgroup of $GAL(1,3)$.

Remembering that $\mathcal{B} \cap \mathcal{Q}$ is the identity, we need to know if the intersection of $\mathcal{T}$ and $\mathcal{L}=\mathcal{B} \rtimes \mathcal{Q}$ is the identity or not.

Given a translation $\tau \in \mathcal{T}$ of parameter $(s,z)$, $s$ being a time component and $z$ a spatial component, the image $\tau (t,x)$ of a vector $(t,x)$ is $(t+s,x+z)$.
The parameter $(s,z)$ must depend neither on $t$ nor on $x$.

We look at $\mathcal{T} \cap \mathcal{B}$ and we assume that it exists $b_v \in \mathcal{B}$ such that $\tau (t,x) = b_v (t,x)$, that is, $(t+s,x+z)=(t,x+tv)$.
It means that $s=0$ and that $z=tv$, which is not a constant unless $z=0$ and $v=0$, and therefore $\mathcal{T} \cap \mathcal{B}$ is the identity.

We look at $\mathcal{T} \cap \mathcal{Q}$ and we assume that it exists $q_\Omega \in \mathcal{Q}$ such that $\tau (t,x) = q_\Omega (t,x)$, that is, $(t+s,x+z)=(t,\Omega x)$.
It means that $s=0$ and that $z=\Omega x - x$, which is not a constant unless $z=0$ and $\Omega$ is the identity.

Finally, we deduce that $\mathcal{T} \cap \mathcal{L}$ is indeed the identity.

Then we look at the action of $l \tau l^{-1}$ on a vector $(t,x)$, where $l \in \mathcal{L}$ and $\tau \in \mathcal{T}$.
From Theorem \ref{th:GALB}, $\mathcal{L} = \mathcal{B} \rtimes \mathcal{Q}$, so we consider two cases: $l \in \mathcal{B}$ and $l \in \mathcal{Q}$.
The space and time translation $\tau$ has for parameter $(s,z)$, where $s \in \mathbb{R}$ is the time component of $\tau$ and $z \in \mathbb{R}^3$ is the spatial component of $\tau$.

Let $b_v \in \mathcal{B}$ be a boost of direction $v \in \mathbb{R}^3$.
It transforms a vector $(t,x)$ into $(t,x+tv)$, and $b_v^{-1}$ transforms $(t,x)$ into $(t,x-tv)$.
Then, $\tau b_v^{-1} (t,x) = (t+s,x-tv+z)$, and $b_v \tau b_v^{-1} (t,x) = (t+s,(x-tv+z)+(t+s)v)$.
Thus, for all $\tau \in \mathcal{T}$ and all $b_v \in \mathcal{B}$, $b_v \tau b_v^{-1}$ is a translation of parameter $(s,z+sv)$.

We denote by $q_\Omega$ be an element of $\mathcal{Q}$ defined by its parameter $\Omega$.
For all $\tau \in \mathcal{T}$ and for all $q_\Omega \in \mathcal{Q}$, and for all vectors $(t,x)$, we observe that $(q_\Omega \tau q_\Omega^{-1})(t,x)=(t+s,x+\Omega z)$.
We deduce that $q_\Omega \tau q_\Omega^{-1}$ is a translation of parameter $(s,\Omega z)$.

Finally, it appears that $\mathcal{L}$ is a normal subgroup of $GAL(1,3)$.

(b) Apply Theorem \ref{th:SDPD}.

(c) Apply Theorem \ref{th:SDPI}.
\end{proof}

\section{The cone rotating in the Newton-Cartan spacetime}\label{sec:CONE}

To illustrate what is chirality in the Newton-Cartan spacetime, we consider the example of an homogeneous conical solid rotating on its symmetry axis.
This object was considered as a model for rotating molecules, and its chirality was subject during several years to a terminological controversy between two authors. 
These authors noticed that submitting the rotating cone to a suitable rotation followed by parity inversion $P$ and time reversal $T$ restored the original rotating come.
The question was to decide if this cone is a direct-symmetric object or an achiral object.

The first author considered that the rotating cone is a chiroid \cite{Mislow1999} (the term chiroid was defined in \cite{Whyte1958}), and the other one considered that the rotating cone has a false chirality, according to his own terminology \cite{Barron1986a,Barron2013}.
While we agree that terminological issues are important \cite{Petitjean2021c}, we do not enter in this controversy, because it is based on different definitions of chirality, so that both authors may be right in their analysis.
\mbox{We just} notice that the authors did not specify which model of spacetime they considered, and that they their definitions of chirality were physical ones rather than mathematical ones.
We give below our own analysis, but it is not intended to decide which of the authors was right, it is just to illustrate our current approach to chirality.

We consider the cone as a free rigid body.
Its motion can be described by the translation of its center of mass $x_c$ and by the rotation about $x_c$.
The object $(\rho,\nu,\eta)$ representing the cone consists of a scalar field $\rho$, which is the volumetric mass density, and of two vector fields, the linear velocity $\eta$ of the mass center $x_c$, and the angular momentum $\eta$ about $x_c$.
We assume that the conical solid is homogeneous, that is, $\rho$ is proportional to the indicator function of the conical domain, and thus $\rho$ has a symmetry axis.
The motion of the cone is a rotation on its symmetry axis at constant speed, which means that $\nu=0$ and that $\eta$ is a constant vector in the direction of the symmetry axis.
All planes containg the symmetry axis are symmetry planes, and $\rho$ is insensitive to reflections in these planes.

For clarity, we assume that one of the planes containing the symmetry axis is orthogonal to the spatial vector $(1,1,1)$.
Obviously, applying both isometries $P$ and $T$ together leaves the object $(\rho,\nu,\eta)$ invariant.
From Theorems \ref{th:GALK}-\ref{th:GAL}, we know that $PT$ is an indirect isometry of $GAL(1,3)$.
It is in the symmetry group of $(\rho,\nu,\eta)$, from which we conclude that the rotating cone is achiral.

It would have been impossible to conclude about the chirality or achirality of the rotating cone or of any other moving body without specifying the nature of the spacetime and classifying the isometries as direct or indirect.

\section{Conclusions}\label{sec:CONC}

Given a space, its isometry group depends only on which kind of invariance is considered: invariance of distances (metric space), of intervals (quadratic space), or else.
The knowledge of this isometry group contains all information required to classify the isometries as direct and indirect.
Then, it can be decided if an object defined in this space is direct-symmetric or not, and if it is chiral or not.

Our approach was based on an extension of the usual chirality concept in Euclidean spaces to spacetime \cite{Petitjean2019,Petitjean2021b}, but it should not be confused with the chirality concept specific to quantum field theory \cite{Petitjean2020b}.

A physical system may receive several mathematical models.
So, a \mbox{mathematical} model of a physical system must not be confused with the physical system itself.
\mbox{As recalled} in \cite{Petitjean2021c}, it is crucial to understand that mathematical constructs are \mbox{models} of real physical situations and that a mathematical model of symmetry is a simplified image in our mind of some physical situation in which we would like to see symmetry.
Unending controversies about chirality may arise if terminological ambiguities occur.

\vspace{6pt}

\section*{Acknowledgements} I am highly grateful to Prof. Emil Moln\'ar for supporting me to publish this work.

\renewcommand{\refname}{References}


\begin{thebibliography}{59}
\small\setlength{\parskip}{6pt}
\setlength{\itemsep}{0pt plus 0.3ex}


\bibitem{Kamenetskii2021} Kamenetskii, E. \textit{Chirality, Magnetism and Magnetoelectricity}; Springer: Cham, Switzerland, 2021; \mbox{doi: \href{https://doi.org/10.1007/978-3-030-62844-4}{10.1007/978-3-030-62844-4}}.

\bibitem{Mezey1991} Mezey, P.G. \textit{New Developments In Molecular Chirality}; Springer: Dordrecht, The Netherlands, 1991; \mbox{doi: \href{https://doi.org/10.1007/978-94-011-3698-3}{10.1007/978-94-011-3698-3}}.

\bibitem{Hargittai2009} Hargittai, M.; Hargittai, I. \textit{Symmetry through the Eyes of a Chemist}, 3rd ed.; Springer: Dordrecht, The Netherlands, 2009; Section~2.7; \mbox{doi: \href{https://doi.org/10.1007/978-1-4020-5628-4}{10.1007/978-1-4020-5628-4}}.

\bibitem{Palyi2020} P\'alyi, G. \textit{Biological Chirality}; Academic Press: London, UK, 2020; \\ \mbox{doi: \href{https://doi.org/10.1016/C2016-0-01416-X}{10.1016/C2016-0-01416-X}}.

\bibitem{Cederberg2001} Cederberg, J.N. \textit{A Course in Modern Geometries}, 2nd ed.; Springer: New York, 2001; Chap.~3.7; \mbox{doi: \href{https://doi.org/10.1007/978-1-4757-3490-4}{10.1007/978-1-4757-3490-4}}. 

\bibitem{Kelvin1894} Thomson, W. (Lord Kelvein). \textit{The Molecular Tactics of a Crystal}; Clarendon Press: Oxford, UK, 1894; Section~22, footnote p.~27.

\bibitem{Mezey1997} Mezey, P.G. Chirality measures and graph representations. \textit{Comput. Math. Applic.} \textbf{1997}, \textit{34}[11], 105--112; \mbox{doi: \href{https://doi.org/10.1016/S0898-1221(97)00224-1}{10.1016/S0898-1221(97)00224-1}}.

\bibitem{Gerlach2009} Gerlach, H. Handedness and chirality. \textit{MATCH Commun. Math. Comput. Chem.} \textbf{2009}, \textit{61}[1], 5--10.

\bibitem{Petitjean2020} Petitjean, M. Chirality in metric spaces. In memoriam Michel Deza. \textit{Optim. Lett.} \textbf{2020}, \textit{14}[2], 329--338; \mbox{doi: \href{https://doi.org/10.1007/s11590-017-1189-7}{10.1007/s11590-017-1189-7}}.

\bibitem{Petitjean2007} Petitjean, M. A definition of symmetry. \textit{Symmetry Cult. Sci.} \textbf{2007}, \textit{18}[2--3], 99--119; \\ HAL: \href{https://hal.archives-ouvertes.fr/hal-01552499}{https://hal.archives-ouvertes.fr/hal-01552499}. 

\bibitem{Petitjean2021a} Petitjean, M. Global symmetries, local symmetries and groupoids. \textit{Symmetry} \textbf{2021}, \textit{13}[10], 1905; \mbox{doi: \href{https://doi.org/10.3390/sym13101905}{10.3390/sym13101905}}.

\bibitem{Petitjean2019} Petitjean, M. About chirality in Minkowski spacetime. \textit{Symmetry} \textbf{2019}, \textit{11}[10], 1320; \\ \mbox{doi: \href{https://doi.org/10.3390/sym11101320}{10.3390/sym11101320}}.

\bibitem{Petitjean2021b} Petitjean, M. Chirality in geometric algebra. \textit{Mathematics} \textbf{2021}, \textit{9}[13], 1521; \\ \mbox{doi: \href{https://doi.org/10.3390/math9131521}{10.3390/math9131521}}.

\bibitem{Andringa2011} Andringa, R.; Bergshoeff, E.; Panda, S.; de Roo, M. Newtonian gravity and the Bargmann algebra. \textit{Class. Quantum Grav.} \textbf{2011}, \textit{28}[10], 105011; \mbox{doi: \href{https://doi.org/10.1088/0264-9381/28/10/105011}{10.1088/0264-9381/28/10/105011}}.

\bibitem{Hall1959} Hall, M.Jr. \textit{The Theory of Groups}; Macmillan: New York, 1959; Section~6.5. 

\bibitem{Robinson2003} Robinson, D.J.S. \textit{An Introduction to Abstract Algebra}; De Gruyter: Berlin, 2003; Section~4.3; \mbox{doi: \href{https://doi.org/10.1515/9783110198164}{10.1515/9783110198164}}. 


\bibitem{Artin1957} Artin, E. \textit{Geometric Algebra}; Interscience: New York, NY, USA, 1957; Chapter~III.4, \mbox{pp.~129--131}.

\bibitem{Aragon2006} Arag\'on-Gonz\'alez, G.; Arag\'on, J.; Rodr\'iguez-Andrade, M. The decomposition of an orthogonal transformation as a product of reflections. \textit{J. Math. Phys.} \textbf{2006}, \textit{47}[1],~013509; \mbox{doi: \href{https://doi.org/10.1063/1.2161072}{10.1063/1.2161072}}.

\bibitem{Thrall2014} Thrall, R.M.; Tornheim, L. \textit{Vector Spaces and Matrices}; Dover Publications: New York, 2014; Chapter~1.7, p.~24. 


\bibitem{Birkhoff1977} Birkhoff, G.; Mac Lane, S. \textit{A Survey of Modern Algebra}, 4th ed.; Macmillan: New York, 1977; Section~9.3, p.~269; \mbox{doi: \href{https://doi.org/https://doi.org/10.1201/9781315275499}{10.1201/9781315275499}}. 






\bibitem{Carmeli2000} Carmeli, M. \textit{General Relativity. Representations of the Lorentz Group and Their Applications to the Gravitational Field}; Imperial College Press: Singapore, 2000; Section~2.1, pp. 20--22; \mbox{doi: \href{https://doi.org/10.1142/p199}{10.1142/p199}}.

\bibitem{Muller-Kirsten2010} M\"uller-Kirsten, H.J.W.; Wiedemann, A. \textit{Introduction to Supersymmetry}, 2nd~ed.; Lecture Notes in Physics; World Scientific: Singapore, 2010; Volume~80, Chapter~1, p.~22; \mbox{doi: \href{https://doi.org/10.1142/7594}{10.1142/7594}}.

\bibitem{McCabe2007} McCabe, G. \textit{The Structure and the Interpretation of the Standard Model}; Elsevier: Amsterdam, The Netherlands, 2007; Section~2.1, p.~16; \mbox{doi: \href{https://doi.org/10.1016/S1871-1774(06)02001-8}{10.1016/S1871-1774(06)02001-8}}.

\bibitem{Callender2017} Callender, C. \textit{What makes time special?}; Oxford University Press: Oxford, UK; 2017; p.~34; \mbox{doi: \href{https://doi.org/10.1007/978-0-85729-710-5\_1}{10.1007/978-0-85729-710-5\_1}}.

\bibitem{Leihkauf1989} Leihkauf, H. On Newton-Cartan theory. \textit{Ann. Physik} \textbf{1989}, \textit{501}[4], 312--314; \\ \mbox{doi: \href{https://doi.org/10.1002/andp.19895010411}{10.1002/andp.19895010411}}.

\bibitem{deMontigny2006} de Montigny, M.; Niederle, J.; Nikitin A.G. Galilei invariant theories: I. Constructions of indecomposable finite-dimensional representations of the homogeneous Galilei group: directly and via contractions. \textit{J. Phys. A} \textbf{2006}, \textit{39}[29], 9365--9385; \mbox{doi: \href{https://doi.org/10.1088/0305-4470/39/29/026}{10.1088/0305-4470/39/29/026}}.

\bibitem{Ungar1988} Ungar, A.A. Thomas rotation and the parametrization of the Lorentz transformation group. \textit{Found. Phys. Lett.} \textbf{1988}, \textit{1}[1], 57--89; \mbox{doi: \href{https://doi.org/10.1007/BF00661317}{10.1007/BF00661317}}.

\bibitem{Mislow1999} Mislow, K. Molecular chirality. \textit{Topics Stereochem.} \textbf{1999}, \textit{22}, 1--82; \\ \mbox{doi: \href{https://doi.org/10.1002/9780470147313.ch1}{10.1002/9780470147313.ch1}}.

\bibitem{Whyte1958} Whyte, L.L. Chirality. \textit{Nature} \textbf{1958}, \textit{182}[4629], 198; \mbox{doi: \href{https://doi.org/10.1038/182198a0}{10.1038/182198a0}}. 

\bibitem{Barron1986a} Barron L.D. True and false chirality and parity violation. \textit{Chem. Phys. Lett.} \textbf{1986}, \textit{123}[5], 423--427; \mbox{doi: \href{https://doi.org/10.1016/0009-2614(86)80035-5}{10.1016/0009-2614(86)80035-5}}. 



\bibitem{Barron2013} Barron, L.D. True and false chirality and absolute enantioselection. \textit{Rend. Fis. Acc. Lincei} \textbf{2013}, \textit{24}[3], 179--189; \mbox{doi: \href{https://doi.org/10.1007/s12210-013-0224-6}{10.1007/s12210-013-0224-6}}.

\bibitem{Petitjean2020b} Petitjean, M. Chirality of Dirac spinors revisited. \textit{Symmetry} \textbf{2020}, \textit{12}[4], 616; \\ \mbox{doi: \href{https://doi.org/10.3390/sym12040616}{10.3390/sym12040616}}.

\bibitem{Petitjean2021c} Petitjean, M. Symmetry, antisymmetry and chirality: Use and misuse of terminology. \textit{Symmetry} \textbf{2021}, \textit{13}[4], 603; \mbox{doi: \href{https://doi.org/10.3390/sym13040603}{10.3390/sym13040603}}.

\end{thebibliography}
\end{document}